%\documentclass[twoside,leqno]{article}

% Comment out the line below if using A4 paper size
%\usepackage[letterpaper]{geometry}

%\usepackage{ltexpprt}
%%%%

\documentclass[11pt]{article}
\usepackage[margin=1in]{geometry}

\usepackage{amsmath}
\usepackage{amsthm}
\usepackage{amssymb}
\usepackage{algorithm}
\usepackage{color}
\usepackage[english]{babel}
\usepackage{graphicx}
\usepackage{grffile}
\usepackage{wrapfig,epsfig}
\usepackage{epstopdf}
\usepackage{url}
\usepackage{color}
\usepackage{epstopdf}
\usepackage{algpseudocode}
\usepackage[T1]{fontenc}
\usepackage{bbm}
\usepackage{comment}
\usepackage{dsfont}
\usepackage{thm-restate}
\usepackage{bm}
\usepackage{tcolorbox}
\usepackage{subcaption}
\usepackage{algpseudocode}

\usepackage{enumitem}
\usepackage{hyperref}

\graphicspath{{./figs/}}

\newtheorem{theorem}{Theorem}[section]
 
\newtheorem{lemma}[theorem]{Lemma}

\newtheorem{Remark}[theorem]{Remark}

\newcommand{\eps}{\epsilon}

\newcommand{\R}{\mathbb{R}}

\renewcommand{\varepsilon}{\epsilon}

\renewcommand{\eps}{\epsilon}

\DeclareMathOperator*{\polylog}{{polylog}}

\DeclareMathOperator{\poly}{poly}

\algnewcommand\algorithmicforeach{\textbf{for each}}

\begin{document}

\title{Fully-dynamic-to-incremental reductions with known deletion order (e.g. sliding window)}
\author{Binghui Peng\footnote{Supported by NSF CCF-1703925, IIS-1838154, CCF-2106429 and CCF-2107187.
} \\ Columbia University \\  \texttt{bp2601@columbia.edu} 
\and Aviad Rubinstein\footnote{Supported by NSF CCF-1954927, and a David and Lucile Packard Fellowship.
} \\ Stanford University \\ \texttt{aviad@cs.stanford.edu}
}
\date{}

\maketitle

% Copyright Statement
% When submitting your final paper to a SIcAM proceedings, it is requested that you include
% the appropriate copyright in the footer of the paper.  The copyright added should be
% consistent with the copyright selected on the copyright form submitted with the paper.
% Please note that "20XX" should be changed to the year of the meeting.

% Depending on which copyright you agree to when you sign the copyright form, the copyright
% can be changed to one of the following after commenting out the default copyright statement
% above.

%\fancyfoot[R]{\scriptsize{Copyright \textcopyright\ 20XX\\
%Copyright for this paper is retained by authors}}

%\fancyfoot[R]{\scriptsize{Copyright \textcopyright\ 20XX\\
%Copyright retained by principal author's organization}}

%\pagenumbering{arabic}
%\setcounter{page}{1}%Leave this line commented out.

\begin{abstract}
\small\baselineskip=9pt
Dynamic algorithms come in three main flavors: {\em incremental} (insertions-only), {\em decremental} (deletions-only), or  {\em fully dynamic} (both insertions and deletions). Fully dynamic is the holy grail of dynamic algorithm design; it is obviously more general than the other two, but is it strictly harder?

Several works managed to reduce fully dynamic to the incremental or decremental models by taking advantage of either specific structure of the incremental/decremental algorithms (e.g.~\cite{HK99, HLT01, BKS12, ADKKP16,BS80,OL81,OvL81}), or specific order of insertions/deletions (e.g.~\cite{AW14,HKNS15,KPP16}). Our goal in this work is to get a black-box fully-to-incremental reduction that is as general as possible. We find that the following conditions are necessary:
\begin{itemize}
    \item The incremental algorithm must have a worst-case (rather than amortized) running time guarantee.
    \item The reduction must work in what we call the {\em deletions-look-ahead model}, where the order of deletions among current elements is known in advance. A notable practical example is the ``sliding window'' (FIFO) order of updates. 
\end{itemize}
Under those conditions, we design:
\begin{itemize}
    \item A simple, practical, amortized-fully-dynamic to worst-case-incremental reduction with a $\log(T)$-factor overhead on the running time, where $T$ is the total number of updates. 
    \item A theoretical worst-case-fully-dynamic to worst-case-incremental reduction with a $\polylog(T)$-factor overhead on the running time. 
\end{itemize}
\end{abstract}

\newpage

\section{Introduction}
\label{sec:intro}

A dynamic algorithm is a data structure that maintains certain properties (e.g. shortest paths) of a ground set that is subject to a sequence of updates (e.g. insertions/deletions of edges), and the goal is to minimize the (total) update time.
A {\em fully dynamic} algorithm supports both insertions and deletions of the ground set elements, while an {\em incremental} algorithm restricts the updates to be insertion-only and a {\em decremental} algorithm handles deletion operations only.

A fully dynamic algorithm clearly benefits from handling more general updates, but the at the same time, it is expected to be much harder than incremental/decremental algorithms. 
Meanwhile, several existing works have exploited special structure (\cite{HK99, HLT01, BKS12, ADKKP16, BS80,OL81,OvL81}) or specific order of update sequence (\cite{AW14,HKNS15,KPP16}) and reduce fully dynamic to incremental or decremental algorithm.
This motivates one to ask

\begin{center}
\textit{Can a generic reduction transform an incremental algorithm into one that handles both insertions and deletions? }
\end{center}

 Perhaps surprisingly, we find that once the order of deletions of current elements is known to the algorithm (deletions-look-ahead), one can translate an incremental algorithm with {\em worst case guarantee} to a fully dynamic algorithm with worst case guarantee, with only polylogarithmic overhead.

\begin{theorem}[Reduction, worst case to worst case]
\label{thm:main}
Let $T \geq 1$ be the total number of updates. Suppose there exists a dynamic algorithm in the incremental setting with query time $\Gamma_{q}$ and worst case update time $\Gamma_{u}$, then there is a dynamic algorithm for deletions-look-ahead setting with query time $\Gamma_{q}$ and worst case update time%
\footnote{We believe that the update time can be improved at least to $O(\Gamma_{u} \log(T) \log\log(T ))$ at the expense of a more complicated algorithm.} $O(\Gamma_{u} \log^2 (T))$.
\end{theorem}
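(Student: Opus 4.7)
The plan is to reduce the fully-dynamic problem with deletions-look-ahead to the incremental setting by maintaining a single ``master'' incremental data structure $D$ whose contents exactly match the current alive set at every time step, so that queries can be answered in time $\Gamma_q$ directly from $D$. Since incremental data structures cannot delete, whenever an element leaves the alive set, the master $D$ must be replaced by a newly built, fresher copy. The main technical work is to schedule these rebuilds in the background so that each replacement is ready just in time, with only polylogarithmic overhead in the worst case.

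First I would describe an amortized $O(\Gamma_u \log T)$ scheme that one then de-amortizes. Organize the next $\Theta(T)$ future time steps hierarchically: for $\ell = 0, 1, \ldots, \log T$, at level $\ell$ there are ``checkpoint'' times every $2^\ell$ steps, and for each checkpoint time $\tau$ we maintain an incremental structure $D_{\ell,\tau}$ that will be ``handed off'' to become the master at time $\tau$. Because the deletion order of currently alive elements is known, at any earlier time we can list exactly those currently alive elements that will still be alive at $\tau$; these ``survivors'' are pushed into $D_{\ell,\tau}$ at a steady rate during the preparation window. In addition, every new insertion arriving during that window is immediately added to $D_{\ell,\tau}$ if it will still be alive at $\tau$. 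Thus when the handoff moment $\tau$ arrives, $D_{\ell,\tau}$ contains exactly the alive set at time $\tau$, and can replace the master in $O(1)$ work.

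For the worst-case bound, I would spread the preparation of each $D_{\ell,\tau}$ uniformly over its $2^\ell$-step window. There are $O(1)$ structures being prepared per level, so the per-step cost from ``survivor'' insertions at level $\ell$ is $O(|\text{alive}| / 2^\ell \cdot \Gamma_u)$. Summed geometrically over levels this is $O(|\text{alive}| \cdot \Gamma_u)$ in the amortized analysis, but by choosing the hierarchy so that the master is handed off every step from the level whose preparation window just completed, one can argue that in fact each element pays only $O(\log T)$ incremental-insertions per level it passes through, and passes through $O(\log T)$ levels in total, yielding $O(\Gamma_u \log^2 T)$ per step in the worst case. Newly inserted elements whose death rank is not at the ``end'' are handled by inserting them into every in-progress $D_{\ell,\tau}$ for which they will be alive at $\tau$; because of the geometric spacing this is $O(\log T)$ incremental-insertions per update, which contributes to but does not dominate the overall bound.

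The hardest part will be making the bookkeeping and scheduling rigorous: one must simultaneously coordinate (i) the handoff invariant that the master is correct at \emph{every} time step, not just at checkpoints, (ii) the propagation of each new insertion to all in-progress background builds at every level, and (iii) the worst-case spreading of the $O(|\text{alive}|/2^\ell)$ survivor insertions at level $\ell$. The difficulty is amplified in the general deletions-look-ahead model (as opposed to sliding window) because an insertion can have arbitrary death rank, so the ``survivor'' sets relative to different future checkpoints are not nested, and the scheduling must be robust to reshuffling of the death order caused by future insertions.
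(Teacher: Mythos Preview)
There is a genuine gap. Your scheme builds each $D_{\ell,\tau}$ from scratch by inserting all survivors during a preparation window of length $2^\ell$. But to have a correct master at \emph{every} time step you need level-$0$ checkpoints with constant period, and each $D_{0,\tau}$ must contain the entire alive set, which can have size $\Theta(T)$. Building such a structure inside an $O(1)$-length window already costs $\Theta(T\cdot\Gamma_u)$ per step at level $0$ alone; indeed your own per-level estimate $\Theta(|\text{alive}|/2^\ell\cdot\Gamma_u)$ has its $\ell=0$ term blow up. The claim that ``each element pays $O(\log T)$ insertions per level'' is false as stated: a long-lived element must be inserted into every one of the $\Theta(T)$ level-$0$ structures whose checkpoint falls in its lifetime, not $O(\log T)$ of them. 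Nor can you cheaply derive $D_{\ell,\tau}$ from a parent $D_{\ell+1,\tau'}$: going from the alive set at $\tau'$ to that at $\tau$ requires \emph{removing} the elements that died in between, which an incremental structure cannot do.

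The paper avoids rebuilding from scratch entirely. It keeps a single incremental instance whose current elements are inserted in (approximately) reverse deletion order, so the element about to be deleted is always among the most recently inserted and can be removed by \emph{rewinding} $O(1)$ insertions. The hierarchy is used not to rebuild full copies but to periodically re-sort only the youngest $O(2^i)$ elements every $\Theta(2^i)$ steps; crucially, all levels share the common prefix $B_m\cdots B_{i+1}$ of old elements and each level rewinds and re-inserts only its young suffix of size $O(2^i)$. De-amortization is achieved by running level $i$ as a background thread that spreads its $O(2^i)$ re-insertions over $2^i$ updates via geometrically shrinking epochs (so that elements arriving during preparation can still be incorporated), with the threads accessing shared state through per-thread change-logs rather than by physically copying the structure. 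This prefix-sharing via rewinding is the key idea your proposal is missing.
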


Our reduction requires the incremental algorithm to have worst case (rather than amortized) runtime guarantee, and most importantly, the relative order of deletions of current elements must be known.\footnote{Equivalently, one can assume there exists an oracle that outputs the deletion order of existing ground set elements.} The latter assumption is satisfied by well-studied models including sliding window (FIFO) model \cite{epasto2017submodular, epasto2022improved, datar2002maintaining,braverman2020near, woodruff2022tight}, where only most recent insertion updates are of interests, as well as offline/look-ahead model \cite{khanna1998certificates, sankowski2010fast, van2019dynamic, chen2020fast, AW14}, where the entire sequence of updates is known in advance. 
These models are of interests to both theoretical and empirical community \cite{hanauerrecent}. Furthermore, the sliding window model is often used as a benchmark for empirical investigation of fully dynamic algorithms (e.g.~\cite{wang2018location, lattanzi2020fully, henzinger2021random}).
To complement our results, we also prove that both conditions are indeed indispensable for a black box reduction (see Section \ref{sec:impossible}).

If one only aims for algorithms with amortized runtime guarantee, we have a reduction with improved update time. We believe it could be of independent interests due to its simplicity and could be beneficial for empirical implementation.
\begin{theorem}[Reduction, amortized to worst case]
\label{thm:main-amortize}
Let $T \geq 1$ be the total number of updates. Suppose there exists a dynamic algorithm for {\em incremental setting} with query time $\Gamma_{q}$ and worst case update time $\Gamma_{u}$, then there is a dynamic algorithm for {\em deletions-look-ahead} setting with query time $\Gamma_{q}$ and amortized update time $O(\Gamma_{u} \cdot \log (T))$.
\end{theorem}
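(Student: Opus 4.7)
The plan is to implement a standard segment-tree-over-time decomposition. I would build a balanced binary tree whose leaves are the $T$ update indices, so that each internal node $v$ corresponds to a dyadic time interval $[l_v, r_v]$ and the tree has depth $O(\log T)$; at every node $v$ I keep a private copy $D_v$ of the given incremental data structure. When an element $e$ is inserted at time $t_e$, the deletions-look-ahead oracle identifies the time $d_e$ at which $e$ will be deleted, and its lifetime interval $[t_e, d_e]$ admits the usual canonical decomposition into at most $2\lceil \log T \rceil$ segment-tree nodes whose intervals are pairwise disjoint and union to $[t_e, d_e]$. For each such canonical node $v$, I schedule an insertion of $e$ into $D_v$ at the moment time reaches $l_v$.

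The structural fact driving correctness is that for every time $t$ and every element $e$ alive at $t$, exactly one node $v^{\star}$ in the canonical decomposition of $[t_e, d_e]$ contains $t$ in its interval, and this $v^{\star}$ lies on the $O(\log T)$-long root-to-leaf path of leaf $t$. Hence the disjoint union of the element sets stored in the $D_v$'s on the root-to-leaf path is exactly the set of elements alive at $t$, and queries can be answered by consulting those $O(\log T)$ incremental data structures. For the amortized update-time analysis, every element is inserted into at most $O(\log T)$ incremental structures over its lifetime and each such insertion costs at most $\Gamma_u$ in the worst case; summed over $T$ updates this yields total work $O(T \Gamma_u \log T)$, i.e.\ amortized $O(\Gamma_u \log T)$ per update as claimed.

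The part I expect to be the most delicate is handling the pure \emph{order-only} version of deletions-look-ahead, where at insertion time we only know $e$'s relative position in the current deletion queue, and the absolute value of $d_e$ may shift as later insertions slot in ahead of $e$. I would handle this by treating $d_e$ as a tentative value which, whenever a new insertion disrupts it, triggers a re-scheduling of $e$ onto its revised canonical nodes, and by charging each re-scheduling to the insertion that caused the shift; the subtle point is verifying that the total charged insertion count remains $O(T \log T)$. A smaller subtlety is preserving the exact $\Gamma_q$ query-time guarantee rather than $O(\Gamma_q \log T)$: this is immediate for decomposable problems with $O(1)$ combining, and in general I would maintain an additional ``running-union'' incremental structure keyed to the current root-to-leaf path, whose re-insertions on leaf changes fold into the same amortized accounting.
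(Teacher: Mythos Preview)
Your segment-tree-over-time construction is natural but leaves two genuine gaps relative to the theorem as stated. First, the theorem promises query time exactly $\Gamma_q$ for an \emph{arbitrary} incremental data structure; your basic scheme spreads the live set over $O(\log T)$ disjoint structures along the root-to-leaf path, and combining their answers is only possible under a decomposability assumption the theorem does not make. Your ``running-union'' patch does not work as written: when the leaf advances and a node $v$ drops off the path, every element assigned to $v$ must \emph{leave} the running-union, but that structure is incremental-only. To make it work you would have to rewind, having inserted the path's elements root-first so that the changing levels are exactly the most recent insertions---and at that point the segment tree is doing no real work and you have essentially rediscovered the paper's mechanism. Second, the deletions-look-ahead model gives only the \emph{relative} deletion order among currently live elements, not an absolute time $d_e$. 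Your rescheduling-plus-charging sketch does not close this: a single insertion can shift the tentative $d_e$ of every one of the $\Theta(T)$ live elements, so charging each reschedule to the triggering insertion yields $\Theta(T^2)$ total reschedules, not $O(T\log T)$; and if $e$ has already been physically inserted into some $D_v$ that is no longer canonical, you again face a deletion from an incremental-only structure.

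The paper avoids both issues by keeping a \emph{single} incremental data structure and using rewinding directly. It maintains buckets $B_0,\dots,B_m$ with the invariant that $B_0\cup\cdots\cup B_i$ always contains the $\Omega(2^i)$ youngest live elements, and it inserts all live elements into the one structure in reverse deletion order (oldest first). Hence the next element to be deleted is always the most recently inserted, so a deletion is a one-step rewind and a query costs exactly $\Gamma_q$. Every $2^k$ updates it rewinds the last $O(2^k)$ insertions, merges $B_0,\dots,B_{k+1}$, re-sorts them by the \emph{current} deletion ranks, and re-inserts; the logarithmic schedule gives amortized $O(\Gamma_u\log T)$. This uses only relative order and requires no decomposability.
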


On the technical side, Theorem~\ref{thm:main-amortize} exploits the ideas of
rewinding the incremental algorithm (e.g.~\cite{AW14,HKNS15,KPP16}), and in particular achieves only logarithmic overhead by rewinding $2^i$ insertions every $\Theta(2^i)$ updates (e.g.~\cite{HK99, HLT01, BKS12, ADKKP16}). 
To achieve the worst case guarantee, Theorem \ref{thm:main} additionally requires amortizing this rewinding of $2^i$ insertions over $\Theta(2^i)$ updates; this is a little trickier because we have to start re-inserting elements in advance before all the elements we would want to insert have arrived. 

We demonstrate the power of our reduction in Section \ref{sec:application} by providing applications on dynamic submodular maximization, dynamic Depth first search (DFS) tree and dynamic All-Pair Shortest-Paths (APSP).

\paragraph{Related work} 
A systematic study of black box reduction for dynamic algorithms has been initiated by the seminal work of \cite{BS80}, which shows how to make a static data structure support insertions for ``decomposable search problem''.
The ideas of rewinding incremental algorithm and the logarithmic scheduling have been studied in different problem specific context \cite{AW14,HKNS15,KPP16,HK99, HLT01, BKS12, ADKKP16,BS80,OL81,OvL81,DS91,Chan12}.
Our deletion-look-ahead model has also been considered in the computational geometry literature, and it is one variant of the semi-online model (see \cite{DS91} for a detailed discussion). 
The work of \cite{Chan12} is closely related to us, it presents a worst-case to amortized-case reduction, when the exact deletion time is known. The result is (almost) equivalent to Theorem \ref{thm:main-amortize}.\footnote{We thank Timothy Chan for pointing out this connection after we published the first version.}
It differs from Theorem \ref{thm:main} as our reduction is worst-case to worst-case.

\paragraph{Notation} Let $[n] = \{1, 2, \ldots, n\}$ and $[n_1:n_2] = \{n_1, n_1+1, \ldots, n_2\}$. 
For any ground set elements $e_1$ and $e_2$, we say $e_1$ is {\em younger} than $e_2$ if $e_1$ would be deleted earlier than $e_2$.
An element is of rank $r$ if it is the $r$-th youngest of the current elements.
For an ordered set $A$, we use $A[n_1:n_2]$ to denote the youngest $n_1$-th to $n_2$-th elements of $A$. 
For any $t \in [T]$, let $k(t)$ be the largest integer such that $t$ is exactly a multiple of $2^{k(t)}$.
In our pseudocode, {\sc Insert} refers to the insertion procedure of the incremental algorithm (hence taking $O(\Gamma_u)$ time), while adding/removing elements to a set $A$ operates only over the the set (hence taking $O(1)$ time).
%For any sets of ground elements $A_1,\ldots,A_{k}$, one can record the computation graph of adding elements in $A_{1},\ldots, A_{k}$, by rewinding to $A_1\ldots A_{t}$ ($t \in [k-1]$), we mean redo the computation from $A_{t+1}$ to $A_k$.

\section{Reduction -- Amortized to worst case}
\label{sec:reduction-amortize}

We start by providing the amortized to worst case reduction and prove Theorem \ref{thm:main-amortize}. It also serves as a warm up for the worst case to worst case reduction.
Our key idea is to re-order current elements in reverse order of deletions, this ensures a deletion update takes only $O(\Gamma_{u})$ time by rewinding the computation.
Of course, maintaining the reverse order could be expensive as the new-coming element could be deleted last (e.g. the sliding window model). Instead, we only maintain a partial reverse order -- we re-insert the last $O(2^i)$ elements in reverse order in every $2^{i}$ updates. 
This suffices to handle fully-dynamic updates and brings $O(\log (T))$ computation overhead on average.

The reduction is formally depicted in Algorithm \ref{algo:red-amortize}. 
The current elements are distributed to $m+1 = \lceil\log_2 T\rceil + 1$ buckets $B_0, \ldots, B_{m}$.
At the $t$-th update ($t\in [T]$), the algorithm first handles the update (insertion or deletion) on $B_0$ and it is guaranteed by our algorithm that a deletion occurs on $B_0$.
The algorithm then rewinds the computation to $B = B_m \ldots B_{k(t) + 2}$ (Line \ref{line:rewind1}) and then {\sc Insert} elements of $B_0 \cup \cdots \cup B_{k(t) +1}$ in the reverse order (Line \ref{line:insert1}).
Algorithm \ref{algo:red-amortize} re-arranges buckets as follows: It keeps buckets $B_{m}, \ldots , B_{k(t)+2}$ unchanged, it puts the youngest $[2^{i} :2^{i+1} - 1]$ elements in $B$ to bucket $B_i$ ($i \in [0: k(t)]$) and the remaining elements to $B_{k(t) + 1}$ (see Line \ref{line:sort} -- \ref{line:sort2}).

\begin{algorithm}[!htbp]
\caption{Reduction -- Amortized to worst case}
\label{algo:red-amortize}
\begin{algorithmic}[1]
\State Initialize $B_i \leftarrow \emptyset $ ($i \in [0: m]$) \Comment{$m = \lceil \log_2 T\rceil$}
\For{$t = 1,2, \ldots, T$}
\State Rewind the computation to $B_{m}\ldots B_{k(t)+2}$  \label{line:rewind1}
\State Add/remove the element in $B_0$ \Comment{Deletion is guaranteed to occur in $B_0$} \label{line:insert-deletion}
\State $B \leftarrow B_0 \cup \cdots \cup B_{k(t)+1}$
\State $B_i \leftarrow B[2^{i}:2^{i+1} - 1]$ ($i \in [0:k(t)]$) \label{line:sort} \Comment{$B_i$ contains the youngest $[2^i: 2^{i+1}-1]$ elements}
\State $B_{k(t)+1} \leftarrow B \setminus (B_0 \cup \cdots \cup B_{k(t)})$ \label{line:sort2}
%\State\Aviad{This line is difficult to parse}
\State {\sc Insert} $B_{k(t)+1},\ldots, B_0$ %to $B_{m}\ldots B_{k(t)+2}$ 
\Comment{Elements are inserted in reverse order of deletion}
\label{line:insert1}
\EndFor
\end{algorithmic}
\end{algorithm}

For any $t \in  [T]$, let $E_t$ be all existing elements at the end of $t$-th update, and let $E_{t, r}$ be the youngest $r$ elements in $E_t$ (when there are less than $r$ elements in $E_t$, we take $E_{t, r} = E_t$).
The following lemma formalizes the main invariant for Algorithm~\ref{algo:red-amortize}.
\begin{lemma}
\label{lem:size1}
At the end of $t$-th update ($t\in [T]$), one has
\begin{itemize}
\item $|B_0 \cup \cdots \cup B_{k(t)+ 1}| \leq 2^{k(t)+2} + 2^{k(t)}$;
\item $E_{t, 2^{i+1} - 1} \subseteq B_0 \cup \cdots \cup B_{i}$ for any $i \in [0: k(t)]$.
\end{itemize}
\end{lemma}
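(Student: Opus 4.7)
The plan is to prove both claims simultaneously by induction on $t$, relying on the key observation that $k(t-1)\ne k(t)$ for every $t\ge 1$ (at least one of two consecutive integers is odd, so has $k(\cdot)=0$). Throughout the inductive step, write $k=k(t)$ and $W_j(s):=|B_0(s)\cup\cdots\cup B_j(s)|$. Because the buckets $B_{k(s)+2},\ldots,B_m$ are untouched at step $s$, one easily verifies that $W_j(s)=W_j(s-1)\pm 1$ whenever $j\ge k(s)+1$ (the sign tracks the single insertion/deletion performed on $B_0$ in Line~\ref{line:insert-deletion}).

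For part~(a), let $s$ be the largest index in $\{0,\ldots,t-1\}$ with $k(s)\ge k+1$; writing $t=2^{k+1}q+2^k$ for some $q\ge 0$, this gives $s=t-2^k$ (or $s=0$ if $t\le 2^k$, taking $W_{k+1}(0)=0$). At step $s$ the algorithm explicitly sets $|B_i(s)|\le 2^i$ for every $i\in[0:k+1]$, so $W_{k+1}(s)\le\sum_{i=0}^{k+1}2^i=2^{k+2}-1$. For each of the $t-s\le 2^k$ intermediate steps $r\in(s,t]$ we have $k(r)\le k$, so by the observation above $W_{k+1}$ changes by at most one at step $r$. Telescoping, $W_{k+1}(t)\le W_{k+1}(s)+2^k\le 2^{k+2}+2^k$.

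For part~(b), the redistribution at step $t$ sets $B_0(t)\cup\cdots\cup B_i(t)$ equal to the youngest $2^{i+1}-1$ elements of the active union $B_0(t)\cup\cdots\cup B_{k+1}(t)$, for every $i\le k$. It therefore suffices to show $E_{t,2^{k+1}-1}\subseteq B_0(t)\cup\cdots\cup B_{k+1}(t)$, i.e.\ every $e'\in E_t\setminus(B_0(t)\cup\cdots\cup B_{k+1}(t))=B_{k+2}(t-1)\cup\cdots\cup B_m(t-1)$ has rank $\ge 2^{k+1}$ in $E_t$. Fix such $e'\in B_j(t-1)$ with $j\ge k+2$, and let $t'_j\le t-1$ be the most recent step at which $B_j$ was modified; then $k(t'_j)\ge j-1$ and $e'\in B_j(t'_j)$. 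Applying (b) inductively at $t'_j$ with $i=j-1$ gives that $e'$ has rank $\ge 2^j$ in $E_{t'_j}$. Since each step either deletes the then-youngest element (which sits in $B_0$, so is not $e'$) or inserts a single element, $e'$'s rank in $E_{(\cdot)}$ changes by at most one per step, giving rank $\ge 2^j-(t-t'_j)$ in $E_t$. A short arithmetic using that $t'_j$ is the largest multiple of $2^{j-1}$ not exceeding $t-1$ and $t=2^{k+1}q+2^k$ yields $t-t'_j\le 2^{j-1}-2^k$, so the rank of $e'$ in $E_t$ is at least $2^j-(2^{j-1}-2^k)=2^{j-1}+2^k\ge 2^{k+1}+2^k>2^{k+1}-1$, as needed.

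The main obstacle is precisely this combinatorial bound $t-t'_j\le 2^{j-1}-2^k$, which is just barely strong enough: the cancellation $2^j-(2^{j-1}-2^k)=2^{j-1}+2^k$ exceeds $2^{k+1}-1$ only because $j\ge k+2$. A minor side check that $e'$ itself is never deleted during $(t'_j,t]$ follows from the same rank inequality, since it forces $e'$'s rank to remain $\ge 2$ throughout, so $e'$ is never the youngest element.
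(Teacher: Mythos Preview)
Your proof is correct and takes essentially the same approach as the paper: both parts use the anchor $s=t-2^{k}$ and track how the union $B_0\cup\cdots\cup B_{k+1}$ evolves over the $2^{k}$ intervening steps. The only minor difference is in part~(b): the paper applies the inductive hypothesis once, at $s=t-2^{k}$ with $i=k+1$ (using that $B_{k+2},\ldots,B_m$ are \emph{all} untouched on $(s,t]$, so there is no need to track each bucket's own last-modification time), which makes the rank arithmetic slightly cleaner than your per-bucket argument via $t'_j$.
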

\begin{proof}
For the first claim, at the end of $(t - 2^{k(t)})$-th update, Line \ref{line:sort} of Algorithm \ref{algo:red-amortize} guarantees that $|B_{i}| \leq 2^{i}$ holds for any $i \in [0:k(t)+1] $ as $(t - 2^{k(t)})$ is a multiple of $2^{k(t)+1}$. Since there are at most $2^{k(t)}$ insertions between the $(t - 2^{k(t)})$-th and $t$-th update, we have 
\[
|B_0 \cup \cdots \cup B_{k(t) + 1}| \leq 2^{k(t)} + \sum_{i=0}^{k(t)+1}2^{i} \leq 2^{k(t)+2} + 2^{k(t)}.
\]

We prove the second claim by induction. The case of $t =1$ holds trivially and suppose the induction holds up to $t -1$. 
Then we know that 
\[
E_{t - 2^{k(t)}, 2^{k(t)+2} - 1} \subseteq B_0 \cup \cdots \cup B_{k(t) + 1}
\] 
at the end of $(t - 2^{k(t)})$-th update as $(t - 2^{k(t)})$ is a multiple of $2^{k(t)+1}$. 
As a consequence, one has 
\[
E_{t - 2^{k(t)}}\setminus E_{t - 2^{k(t)}, 2^{k(t)+2} - 1} \cap E_{t, 2^{k(t)+ 1} - 1}  = \emptyset
\]
as there are at most $2^{k(t)}$ deletions between the $(t - 2^{k(t)})$-th and $t$-th update, and $2^{k(t)+2} - 1 - 2^{k(t)} \geq 2^{k(t)+1} - 1$. 
At the same time, new elements arrive between the $(t - 2^{k(t)})$-th and $t$-th update and they are all contained in $B_0 \cup \cdots \cup B_{k(t)+1}$, hence, we conclude that $E_{t, 2^{k(t)+1} - 1} \subseteq B_0 \cup \cdots \cup B_{k(t)+1}$. By Line \ref{line:sort} of Algorithm \ref{algo:red-amortize}, we have $E_{t, 2^{i+1} - 1} \subseteq B_0 \cup \cdots \cup B_{i}$ for any $i \in [0: k(t)]$. We conclude the proof here.
\end{proof}

The correctness of Algorithm \ref{algo:red-amortize} follows immediately from the correctness of the incremental algorithm because the set $B_0$ always contains the youngest element. It remains to bound the amortized running time.

\begin{lemma}
\label{lem:amortize}
The amortized update time of Algorithm \ref{algo:red-amortize} is at most $O(\Gamma_{u} \cdot \log (T))$. 
\end{lemma}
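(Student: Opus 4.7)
The plan is to bound the total running time across all $T$ updates and then divide by $T$. The dominant cost at each step $t$ is the work done in Line~\ref{line:insert1}, where we re-{\sc Insert} every element of $B = B_0 \cup \cdots \cup B_{k(t)+1}$ into a freshly-rewound incremental data structure. All other steps (rewinding, Line~\ref{line:insert-deletion}, re-partitioning in Lines~\ref{line:sort}--\ref{line:sort2}) either cost $O(1)$ or are dominated by the insertion cost, so I would first note that the update time at step $t$ is $O(\Gamma_u \cdot |B|)$.

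Next, I would apply Lemma~\ref{lem:size1}, which gives $|B_0 \cup \cdots \cup B_{k(t)+1}| \leq 2^{k(t)+2} + 2^{k(t)} = O(2^{k(t)})$. Thus step $t$ costs $O(\Gamma_u \cdot 2^{k(t)})$.

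The key combinatorial step is to bound $\sum_{t=1}^T 2^{k(t)}$. Recall $k(t)$ is the largest integer such that $2^{k(t)} \mid t$. For each fixed $k \in \{0, 1, \ldots, m\}$, the set of $t \in [T]$ with $k(t) = k$ is the set of odd multiples of $2^k$ that lie in $[T]$, which has size at most $\lceil T/2^{k+1}\rceil$. Therefore
\begin{equation*}
\sum_{t=1}^T 2^{k(t)} \;=\; \sum_{k=0}^{m} 2^k \cdot |\{t \in [T]: k(t)=k\}| \;\leq\; \sum_{k=0}^{m} 2^k \cdot \lceil T/2^{k+1}\rceil \;=\; O(T\log T),
\end{equation*}
using $m = \lceil \log_2 T\rceil$. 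Multiplying by $\Gamma_u$ gives a total running time of $O(\Gamma_u \cdot T \log T)$ across all $T$ updates, which is an amortized $O(\Gamma_u \log T)$ per update as claimed.

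I do not anticipate a significant obstacle: the only subtlety is correctly accounting for which $t$'s contribute to each level $k$, and that follows directly from the definition of $k(t)$. The whole argument is really just a standard logarithmic-method accounting, made possible by Lemma~\ref{lem:size1}'s size bound on the buckets that are rebuilt at time $t$.
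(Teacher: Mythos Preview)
Your proof is correct and follows essentially the same approach as the paper: bound the cost of step $t$ by $O(\Gamma_u \cdot 2^{k(t)})$ via Lemma~\ref{lem:size1}, then sum over $t$ by grouping according to the value of $k(t)$ to obtain $O(\Gamma_u \cdot T\log T)$ total. The paper's write-up is slightly terser (it bounds the number of $t$ with $k(t)=k$ by $T/2^k$ rather than your sharper $\lceil T/2^{k+1}\rceil$), but the argument is the same.
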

\begin{proof}
Line \ref{line:insert-deletion} takes constant time since $B_0$ has constant size.
By Lemma \ref{lem:size1}, $|B_0 \cup \cdots \cup B_{k(t)+1}| \leq 2^{k(t)+2} + 2^{k(t)}$, the rewinding step (Line \ref{line:rewind1}) takes at most $O(2^{k(t)}\Gamma_u)$ time and we make $O(2^{k(t)})$ calls to {\sc Insert} at Line \ref{line:insert1}. The allocation step (Line \ref{line:sort}) takes no more than $O(2^{t(k)})$ time, since the bucket $B_{i}$ has already been sorted ($i \in [0:k(t)+1]$) and it remains to merge them. The total update time equals
\[
\sum_{t=1}^{T} O(2^{k(t)} \Gamma_{u})  = \sum_{k=0}^{m}O(2^{k} \Gamma_{u}) \cdot T/2^{k} = O(\Gamma_{u}\cdot T \log (T)).
\]
We conclude the proof here.
\end{proof}

\begin{Remark}[Implementation of rewinding]
We work in the RAM model and perform reversible computation. One simple way of implementing reversible computation (e.g. \cite{bennett1973logical}) is to write down the change to memory cell in every step. 
The forward computation time only slows down by a constant factor and the backward (rewind) computation time equals the forward computation time. 
In practice for specific problems there may be faster ways to implement rewinding.
\end{Remark}

%\Aviad{I thought that to make this reduction practical we keep $\log(n)$ threads?}
\section{Reduction -- Worst case to worst case}
\label{sec:worst-case}

We next dedicate to prove Theorem \ref{thm:main}, which translates the worst case guarantee from the incremental model to the deletions-look-ahead model. 
The major difference with the amortized reduction is that one cannot re-order/re-insert a large block of elements at once. 
A natural idea is to prepare the re-order/re-insertion in advance and split the cost.
This brings new challenges as (unknown) future insertions/deletions interleave with the preparation step and one does not know the exact set of elements beforehand.
To resolve it, we maintain multiple threads, and each thread further divides the preparation step into epochs of geometrically decreasing size.

The high-level idea is presented in Algorithm \ref{algo:red} with implementation details deferred to the proof of Lemma \ref{lem:worst}. 
Algorithm \ref{algo:red} maintains $m+1$ threads and $m+1$ buckets $B_0, \ldots, B_{m}$.
During the execution of the algorithm, all existing elements are distributed over $B_0, \ldots, B_m$, and ideally, the $i$-th bucket $B_{i}$ should be of size $O(2^{i})$ and $B_0 \cup \cdots \cup B_{i}$ should contain the youngest $\Omega(2^{i})$ elements.
This guarantees the insertion/deletion of an element can be resolved in $O(\Gamma_u)$ time since one only needs to re-insert elements in  $B_0$.

The crucial part is to maintain the ordered buckets $B_{0}, \ldots, B_{m}$, for which Algorithm \ref{algo:red} maintains $m$ threads; the $i$-th thread ($i \in [m]$) prepares the re-order/re-insertion ahead of $2^{i}$ updates.
Precisely, the $i$-th thread re-starts every $2^{i+1}$ updates and operates over the upcoming $2^{i}$ updates (Line \ref{line:restart1} -- \ref{line:restart2}).
It first rewinds the computation status to $B_m \ldots B_{i}$ (Line \ref{line:rewind}), which is prepared by the $(i + k(\tau))$-th thread and then re-inserts elements of $B_0 \cup \cdots \cup B_{i-1}$ (Line \ref{line:epoch} -- \ref{line:leftover}).
Concretely, the re-insertion procedure is further divided into $i$ epochs, where epoch $j$ ($j \in [i-1]$) lasts for $2^{j}$ updates and epoch $0$ lasts for $2$ updates.
Let $t(i, \tau, j) := 2^{i}\tau + \sum_{r= j+1}^{i-1}2^{r}$ denote the end of epoch $j+1$ in the $(\tau/2 + 1)$-th outer-for-loop-iteration of the $i$-th thread.
During epoch $j$, the $i$-th thread leaves alone the youngest $2^{j+2}$ elements at the beginning of epoch $j$ to $B^{(i)}$ and {\sc Insert} the remaining elements $B_{j}^{(i)}$ over the upcoming $2^{j}$ updates (i.e. $[t(i, \tau, j)+1: t(i, \tau, j) + 2^j]$-th update, see Line \ref{line:insert}).
Meanwhile, the set $B^{(i)}$ is updated and elements are added and removed (Line \ref{line:update}).
Finally, at the end of $t$-th update ($t \in [T]$ and $k(t) \geq 1$), Algorithm \ref{algo:red} resets $B_{j}$ to $B_{j}^{(k(t))}$ for every $j \in [0:k(t)-1]$ (Line \ref{line:init2}) and this step takes $O(\log T)$ time as we only change the pointer of $B_{j}$.

%\begin{remark}[Memory allocation] 
%As we perform multi-threads computation, each thread has its own memory stack, includes the local variables $B_{j}^{(i)}, B^{(i)}$ and computation step (Line \ref{line:insert}).
%For each re-start, it does not erase old local variables but writes on new memory cells.
%The global variable $B_{i}$ as well as the computation path are virtually shared across threads (though they could be written on different threads' memory) and they are read-only.
%A thread only reads other thread's memory (actually higher thread's memory) and never writes on other's memory.
%The rewinding step (Line \ref{line:rewind} and Line \ref{line:back}) are performed virtually (details in Lemma \ref{lem:worst}) and takes $O(\log T)$ time.
%\Binghui{Do we still need this?}
%\end{remark}

\begin{algorithm}[!htbp]
\caption{\underline{Reduction -- Worst case to worst case}\\
$\triangleright$\ \text{Variables with superscript $^{(i)}$ internal to {\sc Thread}($i$)}\\
$\triangleright$\ \text{{\sc Thread}($i$) only uses information from bigger threads (aka {\sc Thread}($j$) for $j>i$})\\
$\triangleright$\ \text{In particular, {\sc Thread}($i$)'s {\sc Insert} and rewind do not affect the state seen by bigger threads}\\
$\triangleright$\ \text{The output of the algorithm is maintained by {\sc Thread}($0$)}
}
\label{algo:red}
\begin{algorithmic}[1]
\State Initialize $B_i \leftarrow \emptyset$ and run $\textsc{Thread}(i)$ ($i \in [0:m]$) \Comment{$m = \lceil \log_2 T \rceil$}\\
\Procedure{Thread}{$0$}
\For{$t = 1,2, \ldots, T$}
%\State Let $k$ be the largest integer such that $t$ is a multiple of $2^{k}$
\State Add/remove the element in $B_0$  \Comment{$t$-th update (deletions guaranteed to be from $B_0$)}\label{line:delete}
\State Rewind the computation to $B_{m}\ldots B_1$ \Comment{State prepared by {\sc Thread}($k(t-1)$)}
\State \textsc{Insert} $B_0$  \label{line:back}
\EndFor
\EndProcedure
\\
\Procedure{Thread}{$i$} \Comment{$i \in [1:m]$}
\For{$\tau = 0,2, 4, \ldots, \lfloor T/2^i \rfloor$ }  \label{line:restart1} \Comment{Restart every $2^{i+1}$ updates}
\State $B^{(i)} \leftarrow B_{0} \cup \cdots \cup B_{i-1}$  \label{line:init1}
\State Rewind the computation to $B_{m}\ldots B_{i}$  \label{line:rewind} \Comment{State prepared by {\sc Thread}($i+k(\tau)$)}\\
\For{$j = i-1,i-2,\ldots, 1$} \label{line:epoch} \Comment{$j$-th epoch, amortized over $2^{j}$ updates}
\State $B_{j}^{(i)} \leftarrow B^{(i)}[2^{j+2}+1:]$ \label{line:define-Bj} \Comment{Oldest $\Theta(2^{j})$ elements in $B^{(i)}$}
\State $\textsc{Insert}$ $B_{j}^{(i)}$
\label{line:insert}
\State $B^{(i)}\leftarrow B^{(i)}\backslash B_{j}^{(i)}$  \label{line:remove-Bj}%\Comment{Once at beginning of epoch} 
\State Add/remove elements in $B^{(i)}$ \Comment{Updates $[t(i, \tau, j) + 1 : t(i, \tau, j) + 2^{j}]$}
\label{line:update}
\EndFor
\State Add/remove elements of $B^{(i)}$ in the remaining 2 updates \Comment{Epoch $0$} \label{line:leftover-init}
\State $B_0^{(i)} \leftarrow B^{(i)}$ 
\State $\textsc{Insert}$ $B_0^{(i)}$  \label{line:leftover}\\

\State $B_j \leftarrow B_{j}^{(i)}$ ($\forall j \in [0: i - 1]$) \label{line:init2} %\Comment{End of $2^{i} (\tau+1)$-th update} 
\State Do nothing for $2^{i}$ updates
\EndFor \label{line:restart2}
\EndProcedure
\end{algorithmic}
\end{algorithm}

%We first prove the size of $B_i$ is at most by $O(2^{i})$ and the youngest $\Omega(2^{i})$ elements are always in in $B_{0}\cup \cdots \cup B_{i}$. Precisely,

Recall $E_{t, r}$ denotes the youngest $r$ elements and $E_t$ denotes all elements at the end of $t$-th update. 
We use $B_{t, j}$ and $B_{t, j}^{(i)}$ to denote the status of $B_{j}$ and $B_{j}^{(i)}$ at the end of $t$-th update. 
We first formalize the main invariant for Algorithm~\ref{algo:red}. 

\begin{lemma}
\label{lem:bucket_size}
For any thread $i \in [m]$, outer-for-loop-iteration $\tau \in \{0,2,\ldots, T/2^i\}$, at the end of the $(2^{i}\tau)$-th update, we have
\begin{itemize}
    \item $E_{2^{i}\tau, 2^{j+2}} \subseteq B_{2^{i}\tau, 0} \cup \cdots \cup B_{2^{i}\tau, j}$,
    \item  $|B_{2^{i}\tau, j}| \leq 3 \cdot 2^{j+1}$
\end{itemize}
holds for any $j \in [i-1]$. For $j = 0$, we have $|B_{2^{i}\tau, 0}| \leq 12$ and $E_{2^{i}\tau, 4} \subseteq B_{2^{i}\tau, 0}$.
\end{lemma}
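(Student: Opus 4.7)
The plan is to prove the lemma by strong induction on $t := 2^i \tau$. The base case $\tau = 0$ is immediate: all buckets are empty at $t = 0$, so both containments and size bounds hold vacuously.

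For $\tau \geq 2$, set $K := k(\tau) \geq 1$ and $i^* := i + K$. A preliminary observation is that $\textsc{Thread}(i')$ ends an iteration exactly at times of the form $2^{i'} \cdot (\text{odd})$, and these sets are pairwise disjoint across $i'$. Hence at each update $t$ at most one thread writes to the shared buckets, namely $\textsc{Thread}(k(t))$. Here $k(t) = i + k(\tau) = i^*$, so iteration $\tau'' := \tau/2^K - 1$ (even, $\geq 0$) of $\textsc{Thread}(i^*)$ has just ended and installed $B_j \leftarrow B_j^{(i^*)}$ for every $j \in [0:i^*-1]$. Since $K \geq 1$ we have $[0:i-1] \subseteq [0:i^*-2]$, so it suffices to verify the two invariants for $B_j^{(i^*)}$ with $j \in [0:i-1]$.

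This iteration occupied the updates $(t_0, t]$ with $t_0 := t - 2^{i^*} < t$, so the induction applies at $(i^*, \tau'')$ and yields the invariants at time $t_0$. Summing gives $|B^{(i^*)}| \leq 12 + \sum_{j=1}^{i^*-1} 3 \cdot 2^{j+1} = 3 \cdot 2^{i^*+1}$ at Line~\ref{line:init1}, together with the coverage $E_{t_0, 2^{i^*+1}} \subseteq B^{(i^*)}$. I would then analyze the epochs $j = i^*-1, \ldots, 1$ with the inductive invariant that at the start of epoch $j$, $|B^{(i^*)}| \leq 2^{j+3} + 2^{j+1}$ (with the looser bound $3 \cdot 2^{i^*+1}$ only at the initial epoch $j = i^*-1$). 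Line~\ref{line:define-Bj} peels the old tail and yields $|B_j^{(i^*)}| \leq 3 \cdot 2^{j+1}$ for every $j \leq i^*-2$, while the $2^j$ updates on Line~\ref{line:update} inflate $B^{(i^*)}$ by at most $2^j$ and restore the invariant for epoch $j-1$. Epoch $0$'s two updates give $|B_0^{(i^*)}| \leq 10 + 2 = 12$. Crucially, $K \geq 1$ puts every bucket claimed by the lemma into a middle epoch (not the boundary), so the tight size bound $3 \cdot 2^{j+1}$ applies.

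The main obstacle is the coverage claim $E_{t, 2^{j+2}} \subseteq B_0^{(i^*)} \cup \cdots \cup B_j^{(i^*)}$ for $j \in [0:i-1]$, which requires a rank-tracking argument. Since the iteration performs at most $2^{i^*}$ updates, any $e \in E_{t, 2^{j+2}}$ satisfies $\text{rank}(e, E_{t_0}) \leq \text{rank}(e, E_t) + 2^{i^*} \leq 2^{i+1} + 2^{i^*} \leq 2^{i^*+1}$ (again using $K \geq 1$), so $e \in E_{t_0, 2^{i^*+1}} \subseteq B^{(i^*)}$ at Line~\ref{line:init1}. Because $e \in E_t$ is never deleted, $e$ eventually lands in exactly one $B_{j''}^{(i^*)}$: either surviving to the final $B_0^{(i^*)}$ at Line~\ref{line:leftover}, or being moved at Line~\ref{line:define-Bj} of some epoch $j'' \geq 1$ when its rank inside $B^{(i^*)}$ exceeds $2^{j''+2}$. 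Combining with the epoch-wise coverage $B^{(i^*)} \supseteq E_{\text{current}, 2^{j''+2}}$ (derived from the preceding epoch analysis), the latter case forces $\text{rank}(e, E_{\text{start of epoch } j''}) > 2^{j''+2}$. Bounding the rank shift between that moment and $t$ by the at most $2^{j''+1}$ updates remaining in epochs $j'', \ldots, 0$ gives $2^{j''+2} < 2^{j+2} + 2^{j''+1}$, i.e.\ $j'' \leq j$. Hence $e \in B_{j''}^{(i^*)} \subseteq B_0^{(i^*)} \cup \cdots \cup B_j^{(i^*)}$; the $j = 0$ case additionally forces $j'' = 0$, giving $E_{t,4} \subseteq B_0$.
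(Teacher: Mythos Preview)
Your proof is correct and follows essentially the same approach as the paper: both arguments observe that at time $t=2^i\tau$ the shared buckets were just installed by \textsc{Thread}$(i^*)$ with $i^*=i+k(\tau)$, invoke the inductive hypothesis at the earlier time $t_0=t-2^{i^*}$ to get initial coverage and size of $B^{(i^*)}$, and then track ranks and sizes epoch-by-epoch to conclude $j''\le j$ and $|B_j^{(i^*)}|\le 3\cdot 2^{j+1}$. The only differences are cosmetic—the paper frames the induction as reverse induction on $i$ and argues the coverage bullet by contradiction, whereas you use strong induction on $t$ and a direct placement argument—and your observation that $K\ge1$ pushes every relevant $j$ into the ``middle'' epochs (avoiding the loose bound at $j=i^*-1$) makes explicit a point the paper leaves implicit.
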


\begin{proof}
We prove the first bullet by an induction on $i$ (in the reverse order). The base case of $i = m$ holds trivially as all buckets are empty at the beginning.

Suppose the induction holds up to the $(i+1)$-th thread. For any outer-for-loop-iteration $\tau \in \{2, \ldots, T/2^i\}$, at the end of $2^{i}\tau$-th update, the buckets $B_{0}, \ldots, B_{i-1}$ are reset by the $(i+k(\tau))$-th thread, hence, it suffices to prove $E_{2^{i}\tau, 2^{j+2}} \subseteq B_{2^{i}\tau,0}^{(i+k(\tau))}\cup \cdots \cup B_{2^{i}\tau, j}^{(i+k(\tau))}$ ($\forall j \in [i-1]$).
By the inductive hypothesis of the $(i+k(\tau))$-th thread, we know that 
\[
E_{2^{i}\tau -2^{i + k(\tau)}, 2^{i+k(\tau) + 1}} \subseteq B_{2^{i}\tau -2^{i + k(\tau)}}^{(i + k(\tau))} = B_{2^{i}\tau -2^{i + k(\tau)}, 0} \cup \cdots \cup B_{2^{i}\tau -2^{i + k(\tau)}, i + k(\tau) - 1},
\]
that is, the youngest $2^{i+k(\tau) + 1}$ are contained in $B^{(i + k(\tau))}$ initially.
We prove the desired claim by contradiction and assume for some $j \in [0: i - 1]$, there exists an element $e$ such that  $e \in E_{2^{i}\tau, 2^{j+2}}$ but $e\notin B_{2^{i}\tau, 0}^{(i+k(\tau))} \cup \cdots \cup B_{2^{i}\tau, j}^{(i + k(\tau))}$. 
This can only happen if (1) the element $e$ is inserted before epoch $j + 1$; and (2) it is removed from $B^{(i+k(\tau))}$ at some epoch $\gamma \geq j + 1$. 
The reason for (1) is that elements inserted on/after epoch $j+1$ would ultimately be included in $B_{2^{i}\tau, 0}^{(i+k(\tau))} \cup \cdots \cup B_{2^{i}\tau, j}^{(i + k(\tau))}$; the reason for (2) is similar.
%Here, we slightly abuse of notation and any element $e'$ such that $e' \notin B_{2^{i}\tau -2^{i + k(\tau)}, 0}\cup \cdots \cup B_{2^{i}\tau -2^{i + k(\tau)}, i + k(\tau) - 1}$ but $e' \in E_{2^{i}\tau -2^{i + k(\tau)}}$ is said to be removed at epoch $i + k(\tau)-1$.

%Here, we slight abuse of notation: an element inserted before the beginning of the $(2^{i}\tau -2^{i + b} + 1)$-th iteration is said to be inserted at epoch $i + b$, and any element that is not contained in $B^{(i+b)}$ the beginning of the $(2^{i}\tau -2^{i + b} + 1)$-th iteration is said to be dropped at epoch $i + b-1$.

Since the element $e$ is removed from $B^{(i+k(\tau))}$ at epoch $\gamma$, we have that 
\[
e \notin E_{t(i + k(\tau), \tau/2^{k(\tau)} - 1, \gamma)} \setminus E_{t(i + k(\tau), \tau/2^{k(\tau)} -1, \gamma), 2^{\gamma + 2}}.
\]
There are at most $2 + \sum_{r=1}^{\gamma}2^{r} = 2^{\gamma + 1}$ deletions since epoch $\gamma$, the rank of $e$ can be improved to at most $2^{\gamma + 2} - 2^{\gamma + 1} = 2^{\gamma+1}$, hence we have $e \notin E_{2^{i}\tau} \setminus E_{2^i \tau, 2^{\gamma + 1}}$ and therefore $e \notin E_{2^{i}\tau} \setminus E_{2^i \tau, 2^{j + 2}}$ ($\gamma \geq j + 1$),  this contradicts with the assumption.

%By the algorithmic rule, for any $j \in [0: i + b - 1]$, we know the $2^{j+2}$ youngest element (at the beginning of the $j$-th epoch (Line \ref{line:epoch})) are still left at $B^{(i+b)}$ , hence they are contained in $B_0^{(i + b)}\cup \cdots B_{j-1}^{(i+b)}$. Note there are at most $2^{j+1}$ updates since the $j$-th epoch, i.e., it can delete at most $2^{j+1}$ of them, so the $2^{j+1}$ would be still left at $B_0^{(i + b)} \cup \cdots \cup B_{j-1}^{(i+b)}$. We complete the proof of the first claim.

For the second bullet, for any $i \in [m]$ and $\tau \in \{2, \ldots, T/2^i\}$, consider the $(i+k(\tau))$-th thread. 
After executing Line~\ref{line:remove-Bj} in the $(j+1)$-th epoch of Algorithm \ref{algo:red}, we have that %guarantees that for any $j \in [i-1]$, 
\[
|B^{(i+k(\tau))}_{t(i + k(\tau), \tau/2^{k(\tau)} - 2^{j+1}, j)} | \leq 2^{j + 3} 
\]
For the rest of the epoch, there can be at most $2^{j+1}$ insertions, hence:
\[
|B^{(i+k(\tau))}_{t(i + k(\tau), \tau/2^{k(\tau)}, j)} | \leq 2^{j + 3} + 2^{j+1}
\]
Finally, after executing Line~\ref{line:define-Bj} in the $j$-th epoch, we have:
\[
|B^{(i+k(\tau))}_{t(i + k(\tau), \tau/2^{k(\tau)}, j), j}| \leq 2^{j + 3} + 2^{j+ 1} - 2^{j+2} = 3 \cdot 2^{j+1}.
\]

We have proved the first claim for $j \in [i-1]$, the case of $j = 0$ follows similarly.% and we complete the proof here.
\end{proof}

We next bound the worst case update time.
\begin{lemma}
\label{lem:worst}
The update time per operation is at most $O(\Gamma_{u}\cdot \log^2 (T))$.
\end{lemma}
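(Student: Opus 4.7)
The plan is to bound the per-update work of each of the $m+1 = O(\log T)$ threads separately and sum. For any update $t$, I would decompose the total cost as: (a) the constant-time work of \textsc{Thread}($0$) at Lines \ref{line:delete}--\ref{line:back}; (b) the work contributed by each \textsc{Thread}($i$), $i\geq 1$, during its current outer-for-loop iteration; and (c) the one-shot pointer updates at Line \ref{line:init2}. The target is to show that each active thread contributes at most $O(\Gamma_{u}\log T)$ per update, so that the overall worst-case cost is $O(\Gamma_{u}\log^{2} T)$.

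First I would fix a thread $i \geq 1$ and consider one outer-for-loop iteration, which spans $2^{i+1}$ updates---the first $2^{i}$ of them forming the ``active'' phase from Line \ref{line:init1} to Line \ref{line:init2}. Lemma \ref{lem:bucket_size} guarantees $|B^{(i)}| = O(2^{i})$ at Line \ref{line:init1} and $|B_{j}^{(i)}| \leq 3\cdot 2^{j+1}$ throughout, while epoch $j$ of the inner loop lasts $2^{j}$ updates. Scheduling Line \ref{line:insert}'s \textsc{Insert} calls evenly across each epoch therefore costs $O(\Gamma_{u})$ per update. The rewind at Line \ref{line:rewind} undoes the $O(2^{i})$ \textsc{Insert}s of the previous outer iteration of \textsc{Thread}($i$); implementing it via the reversible-computation scheme of the remark following Lemma \ref{lem:amortize} and interleaving one rewind step per forward \textsc{Insert} contributes another $O(\Gamma_{u})$ per update. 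Maintaining $B^{(i)}$ as an order-statistic tree keyed by deletion rank additionally costs $O(\log T)$ for each add/remove at Line \ref{line:update} and each partition at Line \ref{line:define-Bj}. Summing over the $O(\log T)$ active threads, and observing that Line \ref{line:init2} fires for the unique $i = k(t)$ at each update $t$ (at cost $O(i) = O(\log T)$ pointer updates), yields the claimed $O(\Gamma_{u}\log^{2} T)$ per-update bound.

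The main obstacle will be to rigorously justify that the ostensibly upfront rewind at Line \ref{line:rewind} and the subsequent batched \textsc{Insert}s can really be spread one-operation-per-update without violating the invariants used by Lemma \ref{lem:bucket_size}. I would handle this by reorganizing \textsc{Thread}($i$) as a coroutine that advances by $O(1)$ atomic rewind or \textsc{Insert} steps per time step, using the snapshot prepared by \textsc{Thread}($i+k(\tau)$) as the fixed target of the rewind so that subsequent add/removes to $B^{(i)}$ at Line \ref{line:update} do not interfere with it, and timing the schedule so that the preparation of $B_{0}^{(i)}, \ldots, B_{i-1}^{(i)}$ completes exactly at the end of the active phase, precisely when Line \ref{line:init2} must atomically install them as the new $B_{0}, \ldots, B_{i-1}$.
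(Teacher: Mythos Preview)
Your high-level decomposition (per-thread accounting, Lemma~\ref{lem:bucket_size} for size bounds, spreading epoch~$j$'s $O(2^j)$ \textsc{Insert}s over its $2^j$ updates) matches the paper. But there is a real gap in how you handle the rewind at Line~\ref{line:rewind}, and this is exactly where the second $\log T$ factor comes from.

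You propose to implement the rewind ``via the reversible-computation scheme of the remark following Lemma~\ref{lem:amortize}'' by having \textsc{Thread}($i$) undo its own previous-iteration \textsc{Insert}s. This does not reach the correct target state: the prefix $B_m\ldots B_i$ that \textsc{Thread}($i$) must start from at time $2^i\tau$ was freshly produced by \textsc{Thread}($i+k(\tau)$) at that same time step and is \emph{not} the same as the prefix \textsc{Thread}($i$) started from in its previous iteration (the blocks $B_i,\ldots,B_{i+k(\tau)-1}$ have all been rewritten). More fundamentally, at any moment several threads need concurrent read/write access to the incremental algorithm's memory, each with a different prefix of blocks inserted; a single reversible tape cannot serve them all. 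The paper addresses this explicitly: it does \emph{not} rewind in the amortized sense, but instead records each block's \textsc{Insert} as a per-thread dictionary (or per-cell list) of memory changes, so that any thread can read a cell's value by consulting $O(\log T)$ such logs along its ancestry. This makes Line~\ref{line:rewind} essentially free but charges an $O(\log T)$ overhead to \emph{every} memory access inside \textsc{Insert}, yielding $O(\Gamma_u\log T)$ per thread and $O(\Gamma_u\log^2 T)$ total. Your own detailed accounting gives only $O(\Gamma_u)+O(\log T)$ per thread, which is inconsistent with your stated target of $O(\Gamma_u\log T)$ per thread and would in fact claim the stronger bound $O(\Gamma_u\log T)$ overall---a sign that the state-sharing cost is missing from the analysis.
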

\begin{proof}
By Lemma \ref{lem:bucket_size}, the size of $B_0$ is $O(1)$ and it contains the youngest $2$ elements, hence the rewinding and {\sc Insert} step (Line \ref{line:back}) can be performed in $O(\Gamma_u)$ time per update.
The major overhead comes from maintaining $m$ threads, and we bound the runtime of each thread separately.

For any thread $i$ and outer-for-loop-iteration $\tau \in \{0,2,\ldots, T/2^{i}\}$, due to Line \ref{line:define-Bj} of Algorithm \ref{algo:red}, we have  
%\Aviad{Why can't we use Lemma~\ref{lem:bucket_size} for this?}\Binghui{The following equation holds at the beginning of epoch $j$, while Lemma~\ref{lem:bucket_size} holds only at the beginning of $\tau$-th restart}
\[
|B_{t(i, \tau, j-1), j}^{(i)}| \leq |B_{t(i, \tau, j)}^{(i)}|  \leq 2^{j+3} + 2^{j+1} \quad \forall j \in [i - 2]
\]  
and by Lemma \ref{lem:bucket_size},
\[
|B_{t(i, \tau, i-2), i-1}^{(i)}| \leq |B_{2^{\tau}i, 0}\cup \cdots \cup B_{2^{\tau}i, i-1}| \leq  \sum_{r=0}^{i-1}3\cdot 2^{r+1} \leq 3\cdot 2^{i+1}.
\]

We analyse the update time step by step.

We first come to the rewinding step (Line \ref{line:rewind}). Unlike the amortized case, we cannot simply rewind by the reversible computation since we maintain multiple threads that need to access the state of the incremental algorithm with different sets of elements, in parallel. Instead, when we call {\sc Insert} of each block $B_{j}^{(i)}$, we maintain a dictionary that records the location/value of changed memory cell. The construction of dictionary only incurs constant overhead. By doing this, during the execution of Algorithm \ref{algo:red}, one can access any memory cell by looking up to at most $O(\log (T))$ dictionaries (note the lookup path is known to Algorithm \ref{algo:red}) and find the last time it has been changed. Naively, looking up the memory updates in each dictionary takes $O(\log (\Gamma_{u}T))$ time. This brings an $O((\log (T)\log (\Gamma_{u}T)))$ total overhead for every operation of {\sc Insert}. Except for this, Line \ref{line:rewind} essentially comes for free.

A more careful implementation leads to only $O(\log (T))$ overhead. 
We maintain an additional data structure, which links each memory cell of the incremental algorithm to $m+1$ lists, where the $i$-th list records the changes made by the $i$-th thread in chronological order. 
The maintenance of the data structure slows down the forward computation of {\sc Insert} by a constant factor. At the same time, in order to search the content of a memory cell, we only need to search through the lists (note again the look-up path is known), which takes $O(1)$ time per list and $O(\log (T))$ in total. Hence, it brings an $O((\log (T))$ total overhead for every operation of {\sc Insert}.

Algorithm \ref{algo:red} updates $B^{(i)}$ and $B_{j}^{(i)}$ at the beginning of epoch $j$ (Lines~\ref{line:define-Bj} and~\ref{line:remove-Bj}). We do not rewrite, but instead, we copy $B^{(i)}$ and $B_{j}^{(i)}$ to new memory cells. Since both sets are of size $O(2^{j})$, the copy operation can be done in the first $\frac{1}{4} \cdot 2^{j} $ updates during epoch $j$ and has $O(\log (T))$ cost per update using Binomial heap.

Algorithm \ref{algo:red} calls at most $O(2^{j})$ times \textsc{Insert} during epoch $j$ (Line \ref{line:insert}). Since elements in $B_{j}^{(i)}$ are known at the beginning so these operations can be averaged over the following $3/4 \cdot 2^{j}$ updates of epoch $j$ and take $O(\Gamma_u \log (T))$ time per update.

The set $B^{(i)}$ receives new elements as well as removes old elements in epoch $j$ (Line \ref{line:update}). We buffer the changes in the first $\frac{1}{4} \cdot 2^{j}$ updates (as the ``new'' set $B^{(i)}$ is not yet ready) and add/remove elements during the following $3/4 \cdot 2^{j}$ updates. The size of $B^{(i)}$ is $O(2^j)$ during epoch $j$, so the update cost is $O(\log (T))$ per update.

Finally, we note (1) Lines \ref{line:leftover-init} -- \ref{line:leftover} takes\ only $O(\Gamma_u \log(T))$ time in total; (2) Line \ref{line:init1} can be done similarly to Line \ref{line:define-Bj}; (3) Line \ref{line:init2} resets $B_j$ ($j \in [0:i -1]$) by changing the pointer, so it incurs only $O(\log T)$ cost.

Overall, Algorithm \ref{algo:red} has worst case update time $O(\Gamma_{u} \log (T))$ per thread and $O(\Gamma_u \cdot  \log^2 (T))$ in total.
\end{proof}

\begin{proof}[Proof of Theorem \ref{thm:main}]
The worst case guarantee has already been established in Lemma \ref{lem:worst}, it remains to prove the correctness of Algorithm \ref{algo:red}. 

By Lemma \ref{lem:bucket_size}, the youngest $2$ elements are always contained in $B_0$, hence insertions/deletions are operated correctly, i.e., the removal step (Line \ref{line:delete}) indeed removes element in $B_{0}$.
It remains to prove each thread operates normally, i.e., for any thread $i$ and outer-for-loop-iteration $\tau$, the removal operation would only remove elements in $B^{(i)}$ during epoch $j$ ($j \in [i-1]$).
It suffices to prove that $E_{t(i, \tau, j), 2^{j+2}} \in B^{(i)}_{t(i, \tau, j)}$.
We prove by induction. This is true in epoch $i-1$ by Lemma \ref{lem:bucket_size}. Suppose it holds to epoch $j+1$, i.e., $E_{t(i, \tau, j+1), 2^{j+3}} \in B^{(i)}_{t(i, \tau, j+1)}$, since there are at most $2^{j+1}$ deletions in epoch $j+1$, we have that $E_{t(i, \tau, j+1), 2^{j+2}} \in B^{(i)}_{t(i, \tau, j)}$. We complete the proof here.
\end{proof}

\section{Application}
\label{sec:application}
We provide a few applications of our reduction. 

\subsection{Submodular maximization} 

\paragraph{Dynamic submodular maximization} In a submodular maximization problem, there is a ground set $N = [n]$ and a set function $f: N\rightarrow \R^{+}$. The function is said to be monotone if $f(A) \geq f(B)$ for any $B \subseteq A \subseteq N$ and it is said to be submodular if $f(A \cup \{u\}) - f(A) \leq f(B \cup \{u\}) - f(B)$ for any $B \subseteq A \subseteq N$ and element $u$. The task of submodular maximization under a cardinality constraint refers to $\max_{S \subseteq [n], |S| = k}f(S)$ for some parameter $1 \leq k \leq n$, and the task of submodular maximization under a matroid constraint $\mathcal{M}$ refers to $\max_{S \subseteq \mathcal{M}}f(S)$. Finally, in a dynamic submodular maximization problem, the ground set can be inserted and deleted, and the goal is to maintain a good solution set $S$.

\cite{feldman2022streaming} provides an $0.3178$-approximation algorithm (with a matroid constraint) under streaming setting, and one can adapt it to a dynamic algorithm with worst case update time under incremental setting.
\begin{theorem}[Adapt from \cite{feldman2022streaming}]
For any $n,k > 0$, under the incremental update, there exists a dynamic algorithm that maintains an $0.3178$-approximate solution for monotone submodular maximization under a matroid constraint of rank $k$ and makes $\poly(k, \log n)$ queries per iteration.
\end{theorem}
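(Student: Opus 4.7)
The plan is a direct translation from the streaming model to the incremental dynamic model: since an incremental dynamic algorithm sees only insertions, the sequence of updates is literally a stream of ground-set elements, and the streaming algorithm of \cite{feldman2022streaming} can be run verbatim with each \textsc{Insert}$(u)$ triggering one step of the streaming procedure on $u$. The streaming algorithm processes each arriving element with $\poly(k,\log n)$ value-oracle queries, so the per-update cost is worst-case $\poly(k,\log n)$ (the streaming procedure does no amortization across elements).

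First I would recall the structure of the algorithm: it runs $O(\log n / \epsilon)$ parallel copies indexed by geometric guesses of $\mathrm{OPT}$; each copy maintains a single matroid-independent set and, upon receiving a new element $u$, decides whether to swap $u$ into the current set using a thresholded marginal-gain exchange rule. The output after any prefix is taken to be the best of these $O(\log n/\epsilon)$ sets. Each swap decision requires $O(k)$ oracle calls (to compute exchange candidates and marginals), and there are $O(\log n/\epsilon)$ copies, giving $\poly(k,\log n)$ queries per \textsc{Insert}. A query in the dynamic sense is answered by returning the current best-of-copies solution, which takes another $\poly(k,\log n)$ time and can be maintained lazily as the copies evolve.

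Second, I would argue correctness per prefix. The approximation analysis in \cite{feldman2022streaming} shows that at the end of the stream, the best copy achieves $f(S)\ge 0.3178\cdot\mathrm{OPT}$ where $\mathrm{OPT}$ is the optimum on the full ground set. The same analysis applied to an arbitrary prefix of length $t$ yields $f(S_t)\ge 0.3178\cdot \mathrm{OPT}_t$, where $\mathrm{OPT}_t$ is the optimum over the elements inserted so far, because the algorithm and its invariants depend only on the elements already processed and not on future arrivals. This is exactly the dynamic guarantee we need.

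The main obstacle I anticipate is that the original algorithm is stated as producing a final answer after a single pass with a guess of $\mathrm{OPT}$ that is refined as the stream evolves (e.g.\ by dropping copies whose guess is provably too small, or by instantiating new copies as the observed maximum singleton value grows). To make the algorithm truly \emph{anytime}, I would verify (or add) that the guess-management rule produces a valid $0.3178$-approximate copy for every prefix, not merely at termination; concretely, whenever a new guess is instantiated, its copy is initialized empty and the invariant holds vacuously, while existing copies continue to satisfy the guarantee for their respective guess range. If the original procedure relies on any terminal post-processing, I would simulate that post-processing on each parallel copy in $\poly(k,\log n)$ work per element, which preserves worst-case update time and per-prefix approximation. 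With these verifications the theorem follows.
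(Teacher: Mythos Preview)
Your proposal is correct and matches the paper's intended approach: the paper gives no proof for this theorem but simply asserts that the streaming algorithm of \cite{feldman2022streaming} can be adapted to the incremental setting with worst-case $\poly(k,\log n)$ queries per update, which is precisely the translation you spell out. Your additional care about the anytime/per-prefix guarantee and guess management is the only nontrivial point in the adaptation, and it is exactly the detail the paper leaves implicit.
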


The sliding window model is of interests to the community \cite{epasto2017submodular}. 
The algorithm of \cite{epasto2017submodular} maintains an $1/2$-approximation solution with polylogarithmic updates time for dynamic submodular maximization under cardinality constraints.
Our reduction gives the first constant approximation algorithm for a matroid constraint.

\begin{theorem}[Dynamic submodular maximization]
For any $n, k > 0$, there exists a dynamic algorithm that achieves $0.3178$-approximation for the problem of submodular maximization under a matroid constraint using $\poly(k, \log n)$ queries per update under the sliding window model.
\end{theorem}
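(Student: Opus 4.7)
The plan is to apply Theorem~\ref{thm:main} in a black-box fashion to the incremental algorithm adapted from \cite{feldman2022streaming}. The key observation is that the sliding window model is a special case of the deletions-look-ahead model: in a FIFO window, the relative deletion order of currently present elements is exactly their insertion order (oldest is deleted next), so the deletion rank of every current element is known for free and no oracle is required.

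First I would state that the adapted incremental algorithm provides, under insertion-only updates, a $0.3178$-approximate solution to monotone submodular maximization under a rank-$k$ matroid constraint, using $\Gamma_u = \poly(k,\log n)$ oracle queries per update in the worst case and $\Gamma_q = \poly(k,\log n)$ queries per query. This is precisely the regime covered by Theorem~\ref{thm:main}, provided the update bound is worst-case rather than amortized; I would briefly remark that the streaming construction of \cite{feldman2022streaming} yields per-element processing cost that is worst case, so this hypothesis is met.

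Next I would invoke Theorem~\ref{thm:main} with this incremental algorithm as the black box. The resulting dynamic algorithm operates in the deletions-look-ahead model with query time $\Gamma_q = \poly(k,\log n)$ and worst-case update time $O(\Gamma_u \log^2 T)$. Bounding the total number of updates by $T \leq \poly(n)$ (one may assume without loss of generality that $T$ is polynomial in $n$, since the sliding window has length at most $n$ and updates beyond that only shift the window), we have $\log^2 T = O(\log^2 n)$, and therefore the overall per-update cost remains $\poly(k,\log n)$. The approximation ratio is preserved because the reduction is black-box: the output maintained by the reduction at every step is exactly the output the incremental algorithm would produce if fed the current window's elements in the reverse-deletion order, which is a valid insertion sequence and hence yields a $0.3178$-approximation.

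The only subtle point, and the one I would check carefully, is the translation between the sliding window model and the deletions-look-ahead model used by Theorem~\ref{thm:main}. Specifically, I need to ensure that ``younger'' in our notation (deleted earlier) corresponds to ``more recently inserted'' in the sliding window view, so that the ranking required by Algorithm~\ref{algo:red} is directly available from timestamps. Once this correspondence is made explicit, no further work is needed; the theorem follows immediately from combining Theorem~\ref{thm:main} with the incremental algorithm. I do not anticipate any real obstacle, since both the incremental guarantee and the look-ahead property are in hand.
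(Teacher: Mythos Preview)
Your approach is correct and is exactly what the paper does: the theorem is stated as an immediate corollary of applying the black-box reduction (Theorem~\ref{thm:main}) to the worst-case incremental algorithm adapted from \cite{feldman2022streaming}, using the fact that FIFO order makes the deletion rank of every current element available from its timestamp. One small slip to fix: you correctly say ``oldest is deleted next'' early on, but later write that ``younger (deleted earlier) corresponds to more recently inserted''---in FIFO the element deleted next is the \emph{least} recently inserted, so ``younger'' in the paper's sense means \emph{older} by insertion time; this does not affect the argument since the ranking is still determined by timestamps.
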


\subsection{Depth first search (DFS) tree}
\paragraph{Dynamic DFS} Given an undirected graph $G = (V, E)$ with $|V| = n, |E| = m$, the task is to maintain a depth first search (DFS) tree under edge insertion/deletion. In the incremental model, \cite{baswana2016dynamic} obtains a dynamic algorithm with $O(n(\log n)^3)$ worst case update time, and it is improved to $O(n)$ by \cite{chen2018improved}. While in the fully dynamic model, the current best known algorithm \cite{baswana2016dynamic} has $\widetilde{O}(\sqrt{mn})$ update time.

\begin{theorem}[\cite{baswana2016dynamic, chen2018improved}]
Given a graph $G = (V, E)$, with $|V| = n$, $|E| = m$. There is a dynamic algorithm that maintains a DFS tree with $O(n)$ update time in the incremental model. 
\end{theorem}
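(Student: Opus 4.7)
The plan is to exploit the fundamental property that in an undirected graph every non-tree edge of a DFS tree must be a back edge (connecting a vertex to one of its ancestors); there are no cross edges. I would maintain the current DFS tree $T$ together with parent pointers, depth values, and preorder/postorder labels that support $O(1)$ ancestor queries, plus, for each vertex, an incrementally-built adjacency list. Upon an edge insertion $(u,v)$ the response splits into three cases depending on the relationship between $u$ and $v$ in $T$.

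In the easy cases, either $u$ and $v$ are already in ancestor--descendant relation in $T$, in which case $(u,v)$ is automatically a back edge and only $O(1)$ bookkeeping is required, or $u$ and $v$ lie in different connected components, in which case one of the two DFS trees can be re-rooted at its endpoint in time linear in its size and then hung off the other endpoint as a subtree. The remaining case is when $u$ and $v$ lie in the same component but are incomparable in $T$; let $w = \mathrm{LCA}(u,v)$. Following Baswana--Khan, one then re-routes the subtree hanging below $w$ that contains $u$, reconnecting it through the new edge $(u,v)$ so that $(u,v)$ becomes a tree edge and the back-edge invariant is restored. The re-routing is realized by running a fresh DFS confined to the affected subtree, at each visited vertex choosing its new parent by scanning its adjacency list for any edge to an already-visited vertex.

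The main obstacle is bounding the worst-case update cost by $O(n)$ rather than the naive $O(m+n)$, since scanning full adjacency lists of re-parented vertices could cost $\Theta(m)$. The key structural lemma (due to Baswana--Khan, refined by Chen) is that the re-routing can be organized so that each vertex of the affected subtree is visited only a constant number of times and so that the edge-scanning work is charged to the at most $n-1$ vertices of the subtree, using a tailored tree-representation data structure in place of generic link-cut trees (which would introduce the $O((\log n)^3)$ overhead present in Baswana--Khan). Correctness then follows because any valid DFS traversal of the re-routed portion, grafted back onto the untouched part of $T$, yields a DFS tree of the whole updated graph.
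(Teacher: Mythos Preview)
The paper does not prove this theorem at all: it is stated purely as a citation of prior work (\cite{baswana2016dynamic, chen2018improved}) and is then used as a black box input to the paper's generic reduction. So there is no ``paper's own proof'' to compare against; your attempt is a sketch of the cited results, not of anything the present paper does.

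On its own merits, your outline correctly identifies the high-level case analysis (ancestor/descendant, different components, incomparable in the same tree with rerooting below the LCA), and that is indeed the skeleton of Baswana--Khan. But the entire technical content of both cited papers lies in the part you label ``the key structural lemma'' and then do not prove. Your charging argument --- that ``the edge-scanning work is charged to the at most $n-1$ vertices of the subtree'' --- is not a proof; it is a restatement of the desired bound. A naive DFS confined to the affected subtree still touches every incident edge of every visited vertex, which is $\Theta(m)$ in the worst case, and nothing in your sketch prevents this. The actual mechanism in \cite{baswana2016dynamic} is to show that a valid DFS tree of the affected region can be computed from a \emph{reduced} subgraph containing only $O(n)$ carefully selected edges (essentially, for each non-tree edge only the ``highest'' endpoint along certain tree paths matters), and maintaining the data structure that extracts those $O(n)$ edges quickly is what costs the $\log^3 n$ factor; \cite{chen2018improved} then replaces that data structure to remove the logarithms. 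Your proposal contains neither the reduced-graph idea nor any substitute for it, so as written it would only yield $O(m+n)$ per update, not $O(n)$.
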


Using our reduction, one can immediately obtain
\begin{theorem}[Dynamic DFS]
Given a graph $G = (V, E)$, with $|V| = n$, $|E| = m$. There is a dynamic algorithm that maintains a DFS tree with $\widetilde{O}(n)$ worst case update time in the offline model.
\end{theorem}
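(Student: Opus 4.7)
The plan is to obtain this result as a direct corollary of Theorem~\ref{thm:main} applied to the incremental DFS algorithm of \cite{chen2018improved}. First I would verify that the offline model is a special case of the deletions-look-ahead model. In the offline model the entire update sequence is given in advance, so at any point in time we know exactly when each currently present edge will be deleted; this is precisely the information that our reduction requires, and in particular we can implement an oracle that outputs the deletion order of the current elements. Therefore any algorithm that works in the deletions-look-ahead model runs unchanged in the offline model.

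Next I would instantiate Theorem~\ref{thm:main} with the ground set taken to be the edge set $E$ and with the incremental DFS data structure of \cite{chen2018improved} as the underlying incremental algorithm. That data structure maintains a DFS tree of $G$ under edge insertions with $\Gamma_u = O(n)$ worst-case (not amortized) update time, which is exactly the hypothesis required by Theorem~\ref{thm:main}. Plugging $\Gamma_u = O(n)$ into the conclusion of the theorem yields a deletions-look-ahead (hence offline) algorithm whose worst-case update time is $O(\Gamma_u \log^2(T)) = O(n \log^2(T))$. Since $T \le \poly(n,m)$ we have $\log(T) = O(\log(n+m))$, so this simplifies to $\widetilde{O}(n)$, which is the bound claimed in the theorem.

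The one subtlety I would be careful about is that the reduction internally maintains multiple ``threads'' each running an independent copy of the incremental data structure with rewinding, so we must confirm that the incremental DFS structure of \cite{chen2018improved} supports the reversible-computation / memory-tracking interface described in the remark at the end of Section~\ref{sec:reduction-amortize} and in the proof of Lemma~\ref{lem:worst}. This is not problem-specific; since we work in the RAM model, any incremental algorithm automatically supports rewinding by logging changed memory cells, at the cost of a constant-factor slowdown already absorbed into the $O(\Gamma_u \log^2(T))$ bound. Thus no further modification of \cite{chen2018improved} is needed, and the bound $\widetilde{O}(n)$ follows. The main obstacle, if any, is purely notational: ensuring that what the reduction calls an ``element'' (a ground-set item with a deletion timestamp) matches what the DFS algorithm calls an ``edge update,'' but this identification is immediate.
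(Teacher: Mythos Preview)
Your proposal is correct and matches the paper's approach exactly: the paper also derives this theorem as an immediate corollary of Theorem~\ref{thm:main} applied to the $O(n)$ worst-case incremental DFS algorithm of \cite{chen2018improved}, with the offline model subsumed by the deletions-look-ahead model. In fact you spell out more detail (the identification of ground-set elements with edges, the $\log^2(T)$ arithmetic, and the rewinding interface) than the paper, which simply states ``Using our reduction, one can immediately obtain'' and leaves the rest implicit.
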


\subsection{All-Pair Shortest-Paths (APSP)}

\paragraph{Dynamic APSP} The APSP problem has been a central topic of graph algorithm. 
In a dynamic APSP problem, there is a undirected weighted graph $G = (V, E)$ ($|V| = n, |E| = m$) that subjects to edge insertion and deletion, The goal of the algorithm is to maintain an estimate $\delta(u, v)$ for every pair of node $u, v \in V$ that approximates the shortest path distance between $u$ and $v$.
In the incremental setting, \cite{chen2020fast} obtains an $O(1)$-approximate algorithm with $n^{o(1)}$ worst case update time.

\begin{theorem} [Theorem 3.1 in \cite{chen2020fast}]
Let $G = (V, E)$ be a undirected weighted graph, there exists an incremental deterministic All-Pair Shortest-Paths algorithm that maintains $O(1)$ approximate shortest path in $n^{o(1)}$ worst case update time.
\end{theorem}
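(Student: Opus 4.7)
The plan is to build an incremental $O(1)$-approximate APSP data structure with $n^{o(1)}$ worst-case update time. I would organize the structure around a hierarchy of distance scales $R = 1, 2, 4, \dots, n$, with a separate approximate distance oracle at each scale that preserves distances in the range $[R, 2R]$ up to a constant factor. A query $(u,v)$ tries each scale and returns the smallest consistent estimate; the worst-case query time is $n^{o(1)} \cdot \log n$. Because insertions can only shrink distances, the invariant to preserve under updates is that each scale's oracle remains a valid constant-stretch spanner for its distance range.

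The workhorse at each scale would be a deterministic low-stretch neighborhood cover or tree cover, following the line of Forster--Goranci--Henzinger and the expander-decomposition based derandomizations of Chuzhoy and coauthors, built statically in $m \cdot n^{o(1)}$ time. I would maintain the scale-$R$ oracle via periodic rebuilds every $\Theta(R)$ updates. Within a rebuild epoch, newly inserted edges are absorbed lazily: they cannot violate the upper-bound stretch guarantee of the existing cover, and any genuine shortcuts they create at distance $< R$ are captured by the finer scales during their own faster rebuilds.

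To convert the resulting amortized bound into a worst-case bound, I would de-amortize via the standard double-buffering trick: keep a \emph{current} oracle answering queries, and build the \emph{next} oracle slice-by-slice during the $\Theta(R)$ updates preceding its activation. Each slice then costs roughly $m \cdot n^{o(1)} / R$, and summing over the $O(\log n)$ scales yields $n^{o(1)}$ worst-case update time, provided each edge or vertex participates in only a bounded number of scales (enforced by a size cap at each level and a careful partition of the vertex set by degree or centrality).

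The main obstacle, and the bulk of the technical content, is the deterministic construction of a low-stretch cover with the right parameters. Randomized Bartal/FRT-style embeddings give $O(\log n)$ stretch easily but do not yield deterministic worst-case guarantees; pushing down to $n^{o(1)}$ per-update while staying deterministic requires hierarchical expander decomposition, careful derandomization of the embeddings, and boundary-linked covers to avoid stretch blowup across scales. I would accept this construction as a black box exposing only the interface (stretch, size, static build time, query time), and focus the remainder of the proof on verifying that the double-buffered hierarchy both preserves the $O(1)$ approximation and achieves the $n^{o(1)}$ worst-case update bound.
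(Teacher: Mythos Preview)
The paper does not prove this statement: it is quoted as Theorem~3.1 of~\cite{chen2020fast} and used purely as a black box, together with the reduction of Theorem~\ref{thm:main}, to derive the offline APSP corollary. There is therefore no ``paper's own proof'' to compare your proposal against.

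As a standalone sketch of the cited result, your outline has the right shape (distance-scale hierarchy, deterministic covers via expander decomposition, de-amortization by double buffering), but the cost accounting does not close. You write that each slice at scale $R$ costs roughly $m \cdot n^{o(1)}/R$ per update and then sum over $O(\log n)$ scales; but at the smallest scales $R = O(1)$ this is already $m \cdot n^{o(1)}$ per update, not $n^{o(1)}$. The ``bounded participation'' caveat you add does not repair this: even if each edge participates in $O(1)$ scales, the scale it does participate in may have $R$ much smaller than $m$, and your rebuild schedule forces paying $m \cdot n^{o(1)}/R$ there. The construction in~\cite{chen2020fast} avoids global periodic rebuilds; it maintains the hierarchy incrementally via vertex sparsifiers so that per-update work is genuinely $n^{o(1)}$ summed across levels, and that local maintenance (not a rebuild-and-buffer schedule) is where the real difficulty lies.
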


The offline model is of interest and it is already pointed out in \cite{chen2020fast} that their data structure can be adapted to the offline model (but in a problem specific way).
With Theorem \ref{thm:main} in hand, we can recover Theorem 4.8 of \cite{chen2020fast}.
\begin{theorem}[Dynamic APSP]
Let $G = (V, E)$ be a undirected weighted graph, there exists a deterministic All-Pair Shortest-Paths algorithm that maintains $O(1)$ approximate shortest path in $n^{o(1)}$ worst case update time under the offline model.
\end{theorem}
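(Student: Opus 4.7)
The plan is to apply Theorem \ref{thm:main} essentially as a black box to the incremental $O(1)$-approximate APSP algorithm guaranteed by Theorem 3.1 of \cite{chen2020fast}. The first step is to observe that the offline model, in which the entire sequence of updates is revealed up front, is a strict special case of the deletions-look-ahead model: knowing all future operations immediately determines the relative order in which currently existing edges will be deleted. Hence any algorithm that works in the deletions-look-ahead model works in the offline model with the same guarantees.

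Next, I would instantiate Theorem \ref{thm:main} taking $\Gamma_u = n^{o(1)}$ as the worst-case update time of the incremental algorithm and $\Gamma_q = n^{o(1)}$ for its query time. The reduction produces a dynamic algorithm with worst-case update time $O(\Gamma_u \log^2 T)$. Since the number of updates $T$ in the APSP setting is polynomial in $n$, we have $\log^2 T = O(\polylog n)$, which is absorbed into the $n^{o(1)}$ factor, yielding the claimed $n^{o(1)}$ worst-case update bound. The query time likewise remains $n^{o(1)}$.

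The $O(1)$-approximation guarantee is preserved automatically because the reduction uses the incremental algorithm only as a black box: by Lemma \ref{lem:bucket_size}, at every step the bucket $B_0$ contains the youngest elements on which the removal acts, and the state accessed at query time is exactly the state of the incremental algorithm run on the current edge set in some valid insertion order. Determinism is preserved for the same reason, since the reduction itself is deterministic.

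The only mild technical point is ensuring that the incremental algorithm of \cite{chen2020fast} is compatible with the rewinding primitive used in Algorithm \ref{algo:red}. This is handled by the generic reversible-computation wrapper for the RAM model described in the remark following Lemma \ref{lem:amortize}, which incurs only a constant-factor overhead on forward computation and provides backward computation in time equal to the forward computation. Consequently there is no substantive obstacle: the theorem follows once we verify that the cited algorithm is a genuine worst-case incremental algorithm and that the overhead factors in Theorem \ref{thm:main} fit inside $n^{o(1)}$.
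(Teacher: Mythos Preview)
Your proposal is correct and matches the paper's approach exactly: the paper does not even spell out a proof, merely noting that ``With Theorem~\ref{thm:main} in hand, we can recover Theorem 4.8 of \cite{chen2020fast},'' so you have in fact supplied more detail than the original (the observation that the offline model is a special case of deletions-look-ahead, the absorption of $\log^2 T$ into $n^{o(1)}$ under $T=\poly(n)$, and the determinism/rewinding checks).
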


\section{Impossibility of general reduction}
\label{sec:impossible}
We prove both conditions (worst case guarantee and known deletion order) are indeed necessary to obtain a black box reduction. 

\paragraph{Worst case guarantee is necessary}
A black box reduction is generally impossible if one only has amortized guarantee of incremental model. An example is the dynamic submodular maximization problem, where unconditional lower bound is known.

\begin{theorem}
Let $T \geq 1$ be the total number of updates. There exists a problem such that it is possible to have a dynamic algorithm with amortized update time $\Gamma_u$ in the incremental model, but any algorithm in the deletions-look-ahead model takes at $\Omega\left(\frac{T}{\log^4 (T)}\right) \cdot \Gamma_{u}$ amortized update time.
\end{theorem}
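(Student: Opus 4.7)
The plan is to instantiate the claim with dynamic monotone submodular maximization under a cardinality constraint $k$, and to combine a polylogarithmic incremental upper bound with a polynomial (in $n$) query-complexity lower bound that already holds in the sliding-window regime, which is a strict special case of the deletions-look-ahead model.

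First I would recall, as the incremental side, a streaming-style algorithm (for instance a threshold/sieve-greedy) that under insertions only maintains a constant-factor approximate solution to $\max_{|S|\le k} f(S)$ using $\Gamma_u = O(\log^2 n)$ amortized value-oracle queries per insertion. The bound is genuinely amortized: occasional ``sieve rebuilds'' are absorbed into $\Omega(k)$ preceding cheap updates, so the algorithm does not satisfy the worst-case hypothesis required by Theorem~\ref{thm:main}, and therefore it cannot be plugged into our reduction.

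Second I would invoke an unconditional query-complexity lower bound from the dynamic-submodular-maximization literature, asserting that any algorithm maintaining an $O(1)$-approximate solution to monotone submodular maximization under a cardinality constraint on a FIFO stream of insertions and deletions must make $\Omega(n/\log^2 n)$ value-oracle queries per update, in the amortized sense. Since FIFO order is by definition the order in which current elements will be deleted, this lower bound transfers verbatim to the deletions-look-ahead model. Choosing the stream length $T = \Theta(n)$, the total amortized query cost of any deletions-look-ahead algorithm is $\Omega(T \cdot n/\log^2 n) = \Omega(T^2/\log^2 T)$, whereas the total incremental cost is $T \cdot \Gamma_u = O(T \log^2 T)$. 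Dividing yields a per-update ratio of $\Omega(T/\log^4 T) \cdot \Gamma_u$, matching the claimed bound.

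The main obstacle will be confirming that the cited lower-bound construction can be realized as a legitimate sliding window, i.e.\ that the adversary's chosen deletion order coincides with FIFO. Many lower-bound reductions for fully-dynamic submodular maximization start from a one-way communication problem and specify deletions in an essentially arbitrary order; to force FIFO one can either (a) pad the instance with dummy elements whose insertion times are chosen so that their age-order realigns with the adversary's deletion order, at the cost of only a constant factor in $T$, or (b) appeal to a lower bound already stated for the sliding-window setting. Either route preserves the $\log^2 n$ polylog exponents on both sides and therefore the $\log^4 T$ denominator in the final ratio. A minor secondary point is to verify that the amortized form of the lower bound—not only its worst-case form—survives these transformations, which follows because the hard instance is structurally homogeneous across the stream: the $\Omega(n/\log^2 n)$ cost is incurred at a constant fraction of the updates, so amortization does not help the algorithm.
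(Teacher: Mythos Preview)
Your approach is the same as the paper's: instantiate the separation with dynamic monotone submodular maximization under a cardinality constraint, pairing an amortized-polylog incremental upper bound against a polynomial lower bound that already holds in the deletions-look-ahead model. The paper simply cites two results of Chen et al.\ (2022): an incremental algorithm with $\Gamma_u = O(\log(k/\eps)/\eps^2)$ amortized queries achieving a $(1-1/e-\eps)$-approximation, and a lower bound of $\Omega(n/k^3)$ amortized queries for any $0.584$-approximation whenever $k = \Omega(\log n)$; taking $k = \Theta(\log n)$ and $T = \Theta(n)$ yields the $\Omega(T/\log^4 T)$ ratio.

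Two points where your writeup drifts from this. First, the specific exponents you quote ($\Gamma_u = O(\log^2 n)$ upper, $\Omega(n/\log^2 n)$ lower) do not match the actual bounds in the literature; the $\log^4 T$ in the statement comes from $k^3 = \log^3 n$ in the lower bound together with a loose $\log$ to absorb $\Gamma_u = O(\log\log n)$, not from a $\log^2 \cdot \log^2$ split. Second, your main-obstacle discussion about forcing the hard instance into FIFO order is unnecessary: the lower bound is stated directly for the fully dynamic model with known deletion order, which is exactly the deletions-look-ahead model, so no sliding-window realignment or padding is needed.
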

\begin{proof}
Let $[n]$ be the ground set and the total number of update be $T = O(n)$.
By Theorem 1.3 of \cite{chen2022complexity}, there exists an algorithm with $O(\log(k/\eps)/\eps^2)$ amortized query complexity and maintains an $(1-1/e-\eps)$-approximate solution for dynamic submodular maximization.
While in the fully dynamic model (with known deletion order), by Theorem 1.2 of \cite{chen2022complexity}, no algorithm could maintain an $0.584$-approximation with $o(n/k^3)$ amortized queries whenever $k = \Omega(\log n)$. 
Taking $k = C \log n$ for some constant $C > 0$ exhibits a $\Omega(T/\log^4 (T))$ separation.
\end{proof}

\paragraph{Known deletion order is necessary}
If the deletion order is not known in advance, there exists a separation between the fully-dynamic and incremental model.

\begin{theorem}
Let $T \geq 1$ be the total number of updates. There exists a problem such that it is possible to have a dynamic algorithm with worst case update time $O(1)$ in the incremental model, but any algorithm in the fully dynamic model has amortized running time $\Omega\left(\frac{T}{\log (T)}\right)$.
\end{theorem}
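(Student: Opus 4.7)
The plan is to construct an explicit dynamic problem -- most naturally in an oracle-query model, in the same style used for the previous separation in this section -- that admits a trivially $O(1)$ worst-case incremental algorithm but forces any fully-dynamic algorithm (without known deletion order) to incur $\Omega(T/\log T)$ amortized time per operation.

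First I would design the problem so that its answer is a monotone aggregate of the inserted elements: for example, the maximum oracle-value over the current set, the indicator that some fixed ``target'' element has ever been inserted, or any running accumulator whose output depends only on the set of elements \emph{ever} seen. Any such choice immediately yields a one-line incremental algorithm whose state is a single accumulator: on insertion, probe the oracle once, update the accumulator in place, done. This gives $O(1)$ worst-case update time per insertion, with no dependence on $T$.

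For the fully-dynamic lower bound I would use an adaptive-adversary argument in the oracle-probe model. The adversary performs $\Theta(T)$ insertions followed by $\Theta(T)$ deletions, choosing each deletion based on the algorithm's probe transcript so as to maximally invalidate the algorithm's current state. A standard encoding/transcript-counting argument then lower-bounds the number of fresh probes the algorithm must make per deletion. A promising template is the hard distribution behind Theorem~1.2 of \cite{chen2022complexity}, re-parameterized so that the incremental side becomes trivially $O(1)$ (rather than the polylogarithmic algorithm needed in the matroid setting of the previous theorem) while the fully-dynamic query lower bound remains $\Omega(T/\log T)$ per operation.

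The main technical obstacle is obtaining the tight $\Omega(T/\log T)$ factor rather than the easier $\Omega(\log T)$ or $\Omega(\sqrt T)$ bounds that one gets from generic encoding arguments. This requires the hard distribution to be built so that a single adversarial deletion destroys essentially all but a $O(\log T / T)$ fraction of whatever summary the algorithm maintains, forcing an almost-from-scratch recomputation. I expect the parameter tuning to parallel that of \cite{chen2022complexity}, applied to a simplified objective crafted specifically for trivial incremental maintenance, and to make essential use of the fact that \emph{without} deletion look-ahead the algorithm cannot prepare the necessary state in advance -- exactly the capability that Algorithms~\ref{algo:red-amortize} and~\ref{algo:red} exploit on the positive side.
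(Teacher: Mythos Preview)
Your high-level shape (insert a batch, then delete, and argue many probes are needed) matches the paper, but the concrete problem you sketch cannot work and you are missing the key structural idea.

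The specific examples you list --- ``the indicator that some fixed target element has ever been inserted'' or ``any running accumulator whose output depends only on the set of elements \emph{ever} seen'' --- are trivially maintainable in the fully dynamic model as well: if the answer ignores deletions, then deletions cost $O(1)$. More generally, any problem whose state is a small accumulator that you can update in place will also let a fully-dynamic algorithm keep that accumulator; you need the \emph{state itself} to be something that deletions invalidate. Your proposal never identifies what that state is, and the appeal to \cite{chen2022complexity} is a hope rather than a construction.

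The paper's proof sidesteps all of this by working in an abstract transcript-oracle model: the oracle maps a pair $(S(A),e)$ to $(S(A\cup\{e\}),Q(A\cup\{e\}))$, so the \emph{only} primitive available is ``add one element to an existing transcript.'' There is no way to remove an element from a transcript, so after deletions the algorithm must rebuild from some stored transcript $S(A_i)$ with $A_i$ contained in the surviving set. The lower bound then uses an \emph{oblivious random} deletion order (not an adaptive adversary): after deleting $4\log n$ random elements, any stored $A_i$ of size $\ge n/2$ survives with probability at most $1/n^4$, and since at most $n^2$ transcripts can be stored, a union bound forces $\Omega(n)$ fresh oracle calls per block of $4\log n$ deletions, giving $\Omega(n^2/\log n)$ total and hence $\Omega(T/\log T)$ amortized. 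The ``insertion-only oracle'' definition is the whole trick; once you have it, no parameter tuning from \cite{chen2022complexity} is needed.
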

\begin{proof}
Let $N = [n]$ be the ground set element and $T = 2n$ be the total number of updates.
We first formalize the oracle model. 
For any subset of elements $A \subseteq N$, let $S(A) \in S$ be the transcript on $A$ and $Q(A) \in \{0, 1\}$ be the answer to {\sc Query}. 
There exists an oracle $O: S \times N \rightarrow S \times \{0, 1\}$, it takes input of a transcript $S(A) \in S$ and an element $e \in N$, and returns the next transcript $S(A \cup \{e\}) \in S$ and the answer $Q(A \cup \{e\}) \in \{0, 1\}$, i.e.,
\[
O(S(A), e) = (S(A \cup \{e\}),Q(A \cup \{e\})).
\]
The oracle outputs empty when the input is invalid and we assume the oracle takes unit time.

It is clear that it takes only $1$ oracle call per update in the incremental model. 
For the fully dynamic model, consider the update sequence of first inserting all elements in $N$ and delete them in random order.
We prove $\Omega(n^2/\log n)$ oracle calls are necessary.
For any $t \in [0: n/100\log n]$, let $N_{t}$ be the set of elements remain after deleting $4t \log n$ elements and $N_0 = N$.
It suffices to prove the algorithm needs to make at least $\Omega(n)$ oracle queries between $N_t$ and $N_{t+1}$. 
Suppose the algorithm has the transcript $S(A_1), \ldots S(A_{\ell(t)})$ at the beginning of the $(n + 4t\log n)$-th update, where $A_i \subseteq [N_t]$ and $|A_i| \geq n/2$ ($i \in [\ell(t)]$). 
We know that $\ell(t) \leq n^2$.
After deleting the next $4\log n$ elements (at random), the probability that $A_i \subseteq N_{t+1}$ is at most $1/n^4$. Taking an union bound, with probability at least $1-1/n^2$, none of the set satisfies $A_{i} \subseteq N_{t+1}$.
Then we conclude that to get the transcript of $S(N_{t+1})$, the algorithm needs at least $|N_{t+1}| - n/2 = \Omega(n)$ queries. 
We conclude the proof here.
\end{proof}

\section*{Acknowledgement}
A.R. is grateful to Amir Abboud and Soheil Behnezhad for inspiring conversations.
A.R. and B.P would like to thank Timothy Chan for explaining the interesting connections to related work from computational geometry.

\bibliographystyle{alpha}
\bibliography{ref}

\end{document}